\title{On Induced Online Ramsey Number of Paths,\\ Cycles, and Trees}
\author{V{\'a}clav Bla{\v z}ej\inst{1} \and Pavel Dvo{\v r}{\'a}k\inst{2}
\thanks{P. Dvo{\v r}{\' a}k was supported by the project GAUK 1514217.}
\and Tom{\'a}{\v s} Valla\inst{1}}
\institute{
Faculty of Information Technology, Czech Technical University in Prague,\\ Prague, Czech Republic
\and
Charles University, Faculty of Mathematics and Physics, Prague, Czech Republic
}
\tikzset{
    >=stealth',
    longpath/.style={decorate, decoration={snake}},
    main/.style={draw=black,circle,inner sep=2pt,solid},
    selected red/.style={draw,line width=5pt,-,red!20},
    selected blue/.style={draw,line width=5pt,-,blue!20},
    hide/.style = {fill = none, draw = none, shape=rectangle},
    red/.style = {draw = red, solid, semithick},
    blue/.style = {draw = blue, dashed, semithick},
    tobe/.style = {draw = black, dotted, semithick},
    both/.style = {draw,dashed,blue,dash pattern=on 3pt off 3pt,
        postaction={dashed,red,dash pattern=on 7pt off 11pt, dash phase=-1pt} },
    gray/.style = {draw=gray, very thin},
    black/.style = {draw = black}
}
\def\R{\mathbb{R}}
\newcommand{\WLOG}{without loss of generality\xspace}
\begin{document}

\maketitle

\def\rf{\overline{r}} 
\def\ri{r_{ind}} 
\def\rsif{\rf_{ind}} 
\def\ro{\widetilde{r}} 
\def\rsio{\ro_{ind}}
\def \rrf [#1,#2]{\rf_{#1}(#2)} 
\def \rro [#1,#2]{\ro_{#1}(#2)} 

\def\spread[#1,#2]{#1_1, #1_2, \dots, #1_{#2} } 
\def\spreadu[#1,#2]{#1^1, #1^2, \dots, #1^{#2} } 
\newcommand{\comp}[1]{\overline{#1}} 

\hyphenation{de-ve-loped}
\hyphenation{com-bi-na-to-ri-al}
\hyphenation{Fried-gut}




\begin{abstract}
An online Ramsey game is a game between Builder and
Painter, alternating in turns.
They are given a graph $H$ and a graph $G$ of an infinite set of independent vertices.
In each round Builder draws an edge and Painter colors it either red or blue.
Builder wins if after some finite round there is a monochromatic
copy of the graph $H$, otherwise Painter wins.
The online Ramsey number $\ro(H)$ is the minimum number of rounds such that Builder can
force a monochromatic copy of $H$ in $G$.
This is an analogy to the size-Ramsey number $\rf(H)$
defined as the minimum number such that there exists graph $G$ with $\rf(H)$ edges
where for any edge two-coloring $G$ contains a monochromatic copy of $H$.

In this paper, we introduce the concept of induced online Ramsey numbers:
the induced online Ramsey number $\rsio(H)$ is the minimum number of rounds Builder
can force an induced monochromatic copy of $H$ in $G$.
We prove asymptotically tight bounds on the induced online Ramsey numbers of paths,
cycles and two families of trees.
Moreover, we provide a result analogous to Conlon [On-line Ramsey Numbers, SIAM J. Discr. Math. 2009],
showing that there is an infinite family of trees $T_1,T_2,\dots$, $|T_i|<|T_{i+1}|$ for $i\ge1$, such that
\[
    \lim_{i\to\infty} \frac{\ro(T_i)}{\rf(T_i)} = 0.
\]
\end{abstract}

\section{Introduction}

For a graph $H$, the Ramsey number $r(H)$ is the smallest integer $n$ such that
in any two-coloring of edges of the complete graph $K_n$, there is a monochromatic copy of $H$.
The size-Ramsey number $\rf(H)$, introduced by Erd\H os, Faudree, Rousseau, and Schelp~\cite{Erdos1978TheSRN},
is the smallest integer $m$ such that there exists a graph $G$ with $m$ edges such that
for any two-coloring of the edges of $G$ one will always find a monochromatic copy of $H$.

There are many interesting variants of the usual Ramsey function.
One important concept is the \emph{induced Ramsey number} $\ri(H)$,
which is the smallest integer $n$ for which there is a graph $G$ on $n$ vertices
such that every edge two-coloring of $G$ contains an induced monochromatic copy of $H$.
Erd\H os~\cite{Erdos75problemsand} conjectured the existence of a constant $c$
such that every graph $H$ with $n$ vertices satisfies $\ri(H)\le 2^{cn}$,
which would be best possible.
In 2012, Conlon, Fox and Sudakov~\cite{Conlon2012} proved that there is a constant $c$
such that every graph $H$ with $n$ vertices satisfies $\ri(H)\le 2^{cn\log n}$.
The proof uses a construction of explicit pseudorandom graphs, as opposed to
random graph construction techniques used by previous attempts.
For more on the topic see the excellent review by Conlon, Fox, and Sudakov~\cite{Conlon2015RecentDI}.

The induced size-Ramsey number $\rsif(H)$ is an analog of the size-Ramsey number:
we define $\rsif(H)$ as the smallest integer $m$ such that there exists a graph $G$ with $m$ edges
such that for any two-coloring of the edges of $G$ there is always a monochromatic copy of $H$.
In 1983, Beck~\cite{Beck1983size}, using probabilistic methods, proved the surprising fact that
$\ro(P_n)\le cn$, where $P_n$ is a path of length $n$ and $c$ is an absolute constant.
An even more surprising result came by Haxell, Kohayakawa, and \L uczak~\cite{Haxell1995InducedCycles},
who studied the induced size-Ramsey number of cycles showing that $\rsif(C_n)=O(n)$.
However, the proof uses random graph techniques and regularity lemma
and does not provide any reasonably small multiplicative constant.

\smallskip

In this paper, we study the online variant of size Ramsey number which was introduced independently
by Beck~\cite{Beck1993AchievementsGA} and Kurek and Ruci\'nski~\cite{Kurek2005TwoVariants}.
The best way to define it is in term of a game between two players, Builder and Painter.
An infinite set of vertices is given, in each round Builder draws a new edge and immediately
it is colored by Painter in either red or blue.
The goal of Builder is to force Painter to obtain a monochromatic copy of a fixed graph $H$ (called \emph{target graph}).
The minimum number of edges which Builder must draw in order to obtain such monochromatic
copy of $H$, assuming optimal strategy of Painter, is known as the online Ramsey number $\ro(H)$.
The graph $G$, which is being built by Builder, is called \emph{background graph}.
The online Ramsey number is guaranteed to exist because Builder can simply
create a big complete graph $K_{r(H)}$, which by Ramsey theorem trivially contains a monochromatic copy of $H$.

The winning condition for Builder is to obtain a copy of the target graph $H$.
However, there are more different notions of ``being a copy''.
This leads us to the following two definitions.

\begin{itemize}
\item The \emph{online Ramsey number} $\ro(H)$ is the minimum number of rounds of the Builder-Painter game Builder has
a strategy to obtain a monochromatic subgraph $H$.
\item The \emph{(strongly) induced online Ramsey number} $\rsio(H)$ is the minimum number of rounds of the Builder-Painter game
such that Builder has a strategy to obtain a monochromatic induced subgraph $H$ in $G$.
\end{itemize}
If there is no strategy of Builder to obtain the copy of $H$, we define the respective number as $\infty$.

Note that for any graph $H$ we have $\ro(H) \leq \rsio(H)$.
Also, note that the induced online Ramsey numbers provide lower bounds on the induced size-Ramsey numbers.

In 2008 Grytczuk, Kierstead and Pra{\l}at~\cite{Grytczuk2008OnlineRN} studied the online Ramsey number of paths,
obtaining $\ro(P_n)\le 4n-3$, where $P_n$ is a path with $n$ edges, providing an interesting
counterpart to the result of Beck~\cite{Beck1983size}.
Also, the result by Haxell, Kohayakawa, and {\L}uczak.~\cite{Haxell1995InducedCycles} on induced size-Ramsey number
of cycles naturally bounds the online version as well, but with no reasonable multiplicative constant.

In this paper, we study the induced online Ramsey number of paths, cycles, and trees.
The summary of the results for paths and cycles is as follows.
\begin{theorem}
Let $P_n$ denote the path of length $n$ and let $C_n$ denote a cycle with $n$ vertices.
Then
\begin{itemize}
\item $\rsio(P_n) \le 28n-27$,
\item $\rsio(C_n) \le 367n-27$ for even $n$,
\item $\rsio(C_n) \le 735n-27$ for odd $n$.
\end{itemize}
\end{theorem}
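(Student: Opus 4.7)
The plan is to prove all three bounds via explicit Builder strategies that share a unifying structural trick: if Builder only ever draws an edge between an existing vertex and a \emph{fresh} (previously isolated) vertex, then the background graph $G$ is a forest, and hence any path in $G$ is automatically induced. This immediately reduces the induced online Ramsey problem for $P_n$ to an online tree-Ramsey problem for paths. For cycles, Builder first builds such a forest and then carefully adds closing edges that create exactly one cycle, keeping it chordless.

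For $\rsio(P_n)$, I would maintain two monochromatic induced paths $R$ (red) and $B$ (blue), vertex-disjoint and grown only through fresh vertices. In each round Builder picks an endpoint of one of the two paths and draws an edge to a fresh vertex $u$; if the color matches the intended path, the path grows, otherwise $u$ becomes a candidate endpoint for the opposite color and Builder follows with a short sequence of further extensions through fresh vertices that, regardless of Painter's choices, increases the length of either $R$ or $B$. A potential-function argument tracking the combined length of $R$ and $B$, discounted for the partial gadgets currently in progress, then shows that within $28n-27$ rounds one of the paths reaches length $n$; because the whole background graph is a forest, that path is automatically induced.

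For $\rsio(C_n)$, Builder first invokes the $P_n$ strategy to produce a red induced path $P$ of a carefully chosen length slightly less than $n$, and then closes $P$ into an induced monochromatic cycle of length exactly $n$ by constructing a red induced \emph{bridge} between the two endpoints of $P$ using only fresh auxiliary vertices disjoint from $V(P)$. Employing only fresh vertices guarantees that the closing arc shares no unwanted chords with $P$. The bridge-construction phase is itself a small induced online Ramsey problem between two fixed endpoints, solved by a gadget-based strategy: Painter may spoil individual attempts by coloring a bridge edge blue, in which case Builder discards the affected gadget and retries on new fresh vertices. For even $n$ Builder has parity flexibility in choosing the length of $P$ versus the bridge length, yielding the constant $367$; for odd $n$ the bridge-length parity is forced, which roughly doubles the worst-case bridging cost and yields $735$.

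The main obstacle I expect is the analysis of the bridging phase. Each failed attempt must be bounded in cost, and the collection of retries must not introduce chords that would destroy the induced property of the eventual $C_n$. The plan is to isolate bridging gadgets on disjoint fresh vertex pools, one from another and from $V(P)$, so that failures remain local and contribute only amortized constant cost per vertex in the final cycle. A gadget-by-gadget case analysis of Painter's responses then yields the explicit linear bounds stated in the theorem.
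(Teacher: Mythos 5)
There are genuine gaps in both halves of your plan. For the path bound, the reduction ``only join existing vertices to fresh vertices $\Rightarrow$ the background graph is a forest $\Rightarrow$ every monochromatic path is automatically induced'' is a clean observation, but it hides the entire difficulty in the unproven claim that Builder can still make guaranteed progress. If Builder draws an edge from the red endpoint to a fresh vertex $u$ and Painter colors it blue, that edge is a pendant blue edge hanging off the \emph{red} component; it cannot be spliced into the blue path $B$ without drawing an edge between two already-existing vertices, which your restriction forbids. Under the leaf-addition-only restriction Painter has strong stalling strategies (for instance, color the new edge at $v$ with the color $c$ minimizing the length of the longest $c$-colored path currently ending at $v$), and it is far from clear that any linear bound survives; in any case your ``short sequence of further extensions through fresh vertices that, regardless of Painter's choices, increases the length of either $R$ or $B$'' is precisely the step that must be exhibited and analysed, and it is absent. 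The paper does the opposite: it joins the endpoints of the current red and blue paths directly (so the graph is not kept a forest), pre-builds a matching of $2(7n-7)-1$ isolated edges to obtain a reservoir of $7n-7$ edges of a known color, and maintains inducedness explicitly via the invariant that there are no edges between the two current paths, enforced by deleting one or two end-vertices of a path in some of the four cases; the constant $28$ comes from the potential $3a+4o$ increasing by at least $1$ per pair of edges drawn.

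For the cycles the plan fails more fundamentally: closing a red induced path with a red bridge and ``retrying on fresh vertices when Painter spoils an attempt'' gives Painter a trivial reply --- color every bridge edge blue. Your failed attempts do not accumulate into any blue structure with which Builder could win, so there is no bound on the number of retries and hence no finite bound on the number of rounds. A correct strategy must make the refused color pay: in the paper's even-cycle proof Builder threads three blue paths of length $n/2$ from a common hub $u$ to the two endpoints of a red path of length $n-2$ (weaving through nine auxiliary red paths so that the blue paths stay induced), so that if Painter keeps refusing red, the blue edges themselves complete a blue $C_n$; if Painter ever grants all the red edges at some stage, the red $C_n$ closes instead. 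The odd case is not a parity adjustment of a bridge length, as you suggest: the paper first forces a monochromatic $C_{2n}$ and then adds $n$ chords joining vertices at distance $n-1$, using $\gcd(n-1,2n)=2$, so that either some chord is blue and closes a blue $C_n$ with an arc of the $C_{2n}$, or all $n$ chords are red and themselves form a red $C_n$ on the even-indexed vertices. Without mechanisms of this kind the constants $28$, $367$ and $735$ cannot be derived.
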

A \emph{spider} $\sigma_{k,\ell}$ is a union of $k$ paths of length $\ell$ sharing
exactly one common endpoint. We further show that $\rsio(\sigma_{k,\ell}) = \Theta(k^2\ell)$
and $\ro(\sigma_{k,\ell}) = \Theta(k^2\ell)$.

Although we know that $\ro(H)\le\rf(H)$,
it is a challenging task to identify classes of graphs for which there
is an asymptotic gap between both numbers.
For complete graphs, Chv\'atal observed (see \cite{Erdos1978TheSRN}) that $\rf(K_t)=\binom{r(K_t)}{2}$.
The basic question, attributed to R\"odl (see \cite{Kurek2005TwoVariants}),
is to show $\lim_{t\to\infty} \ro(K_t)/\rf(K_t)$, or put differently, to show that $\ro(K_t) = o(\binom{r(K_t)}{2})$.
This conjecture remains open,
but in 2009 Conlon~\cite{Conlon2009OnlineRN} showed there exists $c>1$ such that for infinitely many $t$,
$$
\ro(K_t) \le c^{-t}\binom{r(K_t)}{2}.
$$
In this paper we contribute to this topic by showing that
there is an infinite family of trees $T_1,T_2,\dots$, with $|T_i|<|T_{i+1}|$ for $i\ge1$, such that
$$
\lim_{i\to\infty} \frac{\ro(T_i)}{\rf(T_i)} = 0,
$$
thus exhibiting the desired asymptotic gap. In fact, we prove a stronger statement,
exhibiting the asymptotic gap even for the induced online Ramsey number.

\section{Induced paths}
\label{sec:inducedPaths}
In this section we present an upper bound on the induced online Ramsey number of paths.
\begin{theorem}\label{thm:indPaths}
    Let $P_n$ be a path of length $n$.
    Then $\rsio(P_n) \le 28n-27$.
\end{theorem}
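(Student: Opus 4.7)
The plan is to construct an explicit Builder strategy that forces a monochromatic induced copy of $P_n$ within $28n - 27$ rounds. The natural template is the Grytczuk-Kierstead-Pra{\l}at bound $\ro(P_n) \le 4n-3$; the constant $28$ is the budget that will pay for the extra fresh vertices needed to guarantee that no chord edges appear inside the final path.

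Throughout the game I would have Builder maintain two vertex-disjoint monochromatic induced paths $R$ (red) and $B$ (blue), with two structural invariants: $(i)$ no edge of the background graph runs between the interiors of $R$ and $B$, and $(ii)$ every previously used vertex that is not part of $R$ or $B$ is separated from the interiors of both paths by a layer of fresh buffer vertices. Together these guarantee that $R$ and $B$ are induced in the background graph at all times, and that any future extension can be safely glued on.

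Each phase of the strategy attempts to extend one of the two paths by a single vertex. Builder picks a fresh vertex $v$, draws the edge from the current endpoint of the target path to $v$, and branches on Painter's response. If Painter cooperates, the path grows by one at the cost of a single edge. Otherwise Painter has produced a wrong-colored edge at a fresh vertex, and Builder invokes a bounded subroutine that uses a constant number of further fresh vertices to either splice $v$ onto the opposite-color path (while preserving both invariants) or discard $v$ with controlled cost. The reliance on fresh vertices is crucial: since $v$ has no prior incident edges, attaching it to a path cannot create a chord with any previously placed vertex.

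The main obstacle will be the amortized accounting: showing that each vertex added to the final monochromatic induced path costs at most $28$ edges on average, even when Painter responds adversarially. I expect this to follow from a potential function tracking $|R| + |B|$ against the total edge count, combined with a case analysis of Painter's responses inside the subroutine, which I anticipate will yield a constant charge per successful extension. The delicate point is controlling long runs of wrong-color responses at a single endpoint: handling this case may force Builder to occasionally retire the current endpoint, or even one of the two paths, and continue growing from an entirely fresh scaffold, with the cost of the retirement absorbed into the amortization.
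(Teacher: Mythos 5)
Your high-level architecture --- two vertex-disjoint monochromatic induced paths grown in parallel, with an amortized potential argument --- is the same skeleton the paper uses, but as written the proposal has a genuine gap: the two devices you name would not close the amortization, and the two devices that actually make the paper's proof work are absent. First, the potential $|R|+|B|$ stalls. In any scheme of this kind there is an unavoidable case in which Painter's answers force Builder to extend one path by a single vertex while sacrificing exactly one endpoint of the other path (the endpoint that just received a wrong-colored edge must be deleted to keep that path induced). In that case $|R|+|B|$ does not change, so no bound on the number of phases follows. The paper escapes this with an \emph{asymmetric} potential $3a+4o$ (where $a$ is the length of the path in one distinguished color and $o$ the other), which strictly increases in every case --- but that weighting is only justified by the second missing ingredient.

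That second ingredient is a reservoir of pre-colored edges: Builder first draws $2(7n-7)-1$ isolated edges so that at least $7n-7$ of them share an ``abundant'' color. In each phase Builder connects the two current endpoints to each other and to an endpoint of an unused reservoir edge; whenever the abundant-color path is extended, the reserved monochromatic edge is appended for free, so that path gains up to three vertices per phase while the other loses at most two. This asymmetry is exactly what makes the weights $3$ and $4$ work and what produces the constants ($7n-7$ maximum potential, two built edges per phase, plus the reservoir, giving $28n-27$). Your proposal instead always attaches the next edge to a fresh vertex and defers the hard case (``long runs of wrong-color responses'') to an unspecified subroutine; but a fresh vertex carries no color guarantee, so Painter can always answer in the color that forces the zero-progress case above, and ``retiring an endpoint or a whole path'' has no bounded amortized cost without some stockpile of guaranteed-color material to fall back on. To repair the argument you need both the reservoir (or an equivalent source of guaranteed monochromatic extensions) and a potential that weights the two colors unequally; with only what is stated, the claimed constant charge per extension does not follow.
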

{
\begin{proof} 
\def \R [#1,#2]{R^{#1}}
\def \B [#1,#2]{B^{#1}}
\def \P [#1]{p^{#1}}
\def \RR [#1]{R^{#1}}
\def \BB [#1]{B^{#1}}
\def \CONC {\cup}
First we build the set $I$ of $2(7n-7)-1$ isolated edges,
then at least $7n-7$ have the same color, we say this color is \emph{abundant} in $I$.

Let $\R[0,0]$ and $\B[0,0]$ be the initial paths of lengths $0$.
In $s$-th step we have a red induced path $\R[s,r]=(r_0,\{r_0,r_1\},r_1,\dots,r_r)$ of length $r$
and a blue induced path $\B[s,b]=(b_0,\{b_0,b_1\},b_1,\dots,b_b)$ of length $b$.
We denote the concatenation of paths $A$ and $B$ by $A\CONC B$.
The removal of vertices and incident edges is denoted by $A \setminus \{v\}$.
We define the potential of $s$-th step $\P[s]=3a+4o$ where $a$ is the length of the path in color
which is abundant in $I$ and $o$ is the length of path in the other color.
Further, we show that we are able to maintain the invariant
that there are no edges between the $\R[s,r]$ and $\B[s,b]$ and that $p^{s+1} > p^s$.

Assume \WLOG that the blue edges are abundant in $I$.
Let $g=\{x,y\}$ be an unused blue edge from the set $I$.
One step of Builder is as follows.
Builder creates an edge $e=\{r_r,b_b\}$.
If Painter colored $e$ red then Builder creates an edge $f=\{b_b,x\}$,
however if $e$ is blue then Builder creates $f=\{r_r,x\}$.

Depending on how the $e$ and $f$ edges were colored we end up with four different scenarios.
These different cases are also depicted in Fig.~\ref{fig:paths}.

\bgroup
\small
\[
    (\BB[s+1],\RR[s+1]) = 
    \begin{cases} 
        \bigl(\B[s,b] \CONC (e,r_r,f,x,g,y), \R[s,r] \setminus \{r_r,r_{r-1}\}\bigr) & \text{if $e$ and $f$ are blue} \\
        \bigl(\B[s,b] \setminus \{b_b\}, \R[s,r] \CONC (f,x)\bigr) & \text{if $e$ is blue and $f$ is red} \\
        \bigl(\B[s,b] \CONC (f,x,g,y), \R[s,r] \setminus \{r_r\}\bigr) & \text{if $e$ is red and $f$ is blue} \\
        \bigl(\B[s,b] \setminus \{b_b,b_{b-1}\}, \R[s,r] \CONC (e,b_b,f,x)\bigr) & \text{if $e$ and $f$ are red}
    \end{cases}
\]
\[
    \P[s+1] = 
    \begin{cases} 
        3\bigl(|B^s|+3\bigr)+4\bigl(|R^s|-2\bigr) = \P[s]+1 &\text{if $e$ and $f$ are blue} \\
        3\bigl(|B^s|-1\bigr)+4\bigl(|R^s|+1\bigr) = \P[s]+1 &\text{if $e$ is blue and $f$ is red} \\
        3\bigl(|B^s|+2\bigr)+4\bigl(|R^s|-1\bigr) = \P[s]+2 &\text{if $e$ is red and $f$ is blue} \\
        3\bigl(|B^s|-2\bigr)+4\bigl(|R^s|+2\bigr) = \P[s]+2 &\text{if $e$ and $f$ are red}
    \end{cases}
\]
\egroup
We obtain a pair of paths $\BB[s+1], \RR[s+1]$ such that $\P[s+1]>\P[s]$ and invariant holds.

The maximum potential for which Builder did not win yet is $\P[s] = 7n-7$.
Therefore there are no more than $7n-6$ steps to finish one monochromatic induced path of length $n$.
To create the initial set $I$ Builder creates $2(7n-7)-1$ isolated edges.
In each step, Builder creates two edges.
The total number of edges created by Builder is no more than $2(7n-6)+2(7n-7)-1 = 28n-27$.
\qed\end{proof}
}

{%
\def\u{${b_b}$}
\def\v{${r_r}$}
\def\w{$w$}
\def\ea{$e$}
\def\eb{$f$}
\def\ec{$g$}
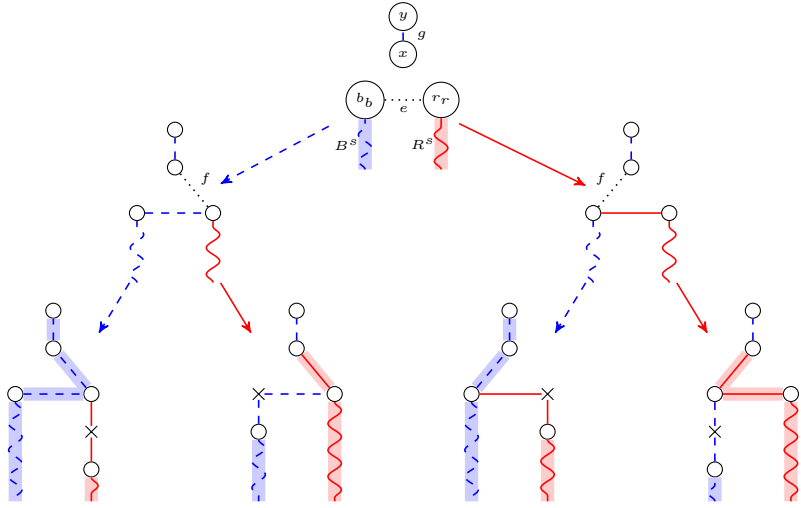
\begin{figure}[h]
    \centering
    \begin{tikzpicture}[scale=0.5]
    \tikzstyle{every node}=[font=\tiny, main]
    \def\xy{1.21}
    \def\yy{2.21}
    \def\zeroY{8}
    \def\firstShift{6}
    \def\firstY{5}
    \def\secondShift{3.2}
    \def\secondY{0.2}
    \begin{scope}[local bounding box=scope0, shift={(0,\zeroY)}]
        \node (b) at (-1,0){\u};
        \node (r) at (1,0){\v};
        \node (x) at (0,\xy){$x$};
        \node (y) at (0,\yy){$y$};
        \node[hide] (b2) at (-1,-2){};
        \node[hide] (r2) at ( 1,-2){};
        \draw[blue] (x) --node[hide,right,shift={(0.1,0)}]{\ec} (y);
        \draw[selected blue]  (b2) -- (b);
        \draw[blue, longpath] (b2) -- node [hide,left] {$B^s$} (b);
        \draw[selected red]   (r2) -- (r);
        \draw[red, longpath]  (r2) -- node [hide,left] {$R^s$} (r);
        \draw[tobe] (b) -- node [hide,below] {\ea} (r);
    \end{scope}
    \begin{scope}[local bounding box=scope1, shift={(-\firstShift,\firstY)}]
        \node (b) at (-1,0){};
        \node (r) at (1,0){};
        \node (x) at (0,\xy){};
        \node (y) at (0,\yy){};
        \node[hide] (b1) at (-1,-2){};
        \node[hide] (r1) at (1,-2){};
        \draw[blue, longpath] (b1) -- (b);
        \draw[red, longpath]  (r1) -- (r);
        \draw[blue] (b) -- (r);
        \draw[blue] (x) -- (y);
        \draw[tobe] (r) --node[hide,above right]{\eb} (x);
    \end{scope}
    \begin{scope}[local bounding box=scope2, shift={(\firstShift,\firstY)}]
        \node (b) at (-1,0){};
        \node (r) at (1,0){};
        \node (x) at (0,\xy){};
        \node (y) at (0,\yy){};
        \node[hide] (b1) at (-1,-2){};
        \node[hide] (r1) at (1,-2){};
        \draw[blue, longpath] (b1) -- (b);
        \draw[red, longpath]  (r1) -- (r);
        \draw[red] (b) -- (r);
        \draw[blue] (x) -- (y);
        \draw[tobe] (b) --node[hide,above left]{\eb} (x);
    \end{scope}
    \begin{scope}[local bounding box=scope11, shift={(-\firstShift-\secondShift,\secondY)}]
        \node (b) at (-1,0){};
        \node (r) at (1,0){};
        \node (x) at (0,\xy){};
        \node (y) at (0,\yy){};
        \node[cross out] (r1) at (1,-1){};
        \node (r2) at (1,-2){};
        \node[hide] (b3) at (-1,-3){};
        \node[hide] (r3) at ( 1,-3){};
        \draw[selected red]   (r2) -- (r3);
        \draw[red]            (r) -- (r1);
        \draw[red]            (r1) -- (r2);
        \draw[red, longpath]  (r2) -- (r3);
        \draw[selected blue]  (b3) -- (b);
        \draw[selected blue]  (b) -- (r);
        \draw[selected blue]  (r) -- (x);
        \draw[selected blue]  (x) -- (y);
        \draw[blue, longpath] (b) -- (b3);
        \draw[blue]           (b) -- (r);
        \draw[blue]           (r) -- (x);
        \draw[blue]           (x) -- (y);
    \end{scope}
    \begin{scope}[local bounding box=scope12, shift={(-\firstShift+\secondShift,\secondY)}]
        \node[cross out] (b) at (-1,0){};
        \node (r) at (1,0){};
        \node (x) at (0,\xy){};
        \node (y) at (0,\yy){};
        \node (b1) at (-1,-1){};
        \node[hide]           (b2) at (-1,-3){};
        \node[hide]           (r2) at ( 1,-3){};
        \draw[selected blue]  (b1) -- (b2);
        \draw[blue]           (b) -- (b1);
        \draw[blue, longpath] (b1) -- (b2);
        \draw[selected red]   (r) -- (r2);
        \draw[red, longpath]  (r) -- (r2);
        \draw[selected red]   (r) -- (x);
        \draw[red]            (r) -- (x);
        \draw[blue]           (b) -- (r);
        \draw[blue]           (x) -- (y);
    \end{scope}
    \begin{scope}[local bounding box=scope21, shift={(+\firstShift-\secondShift,\secondY)}]
        \node (b) at (-1,0){};
        \node[cross out] (r) at (1,0){};
        \node (x) at (0,\xy){};
        \node (y) at (0,\yy){};
        \node (r1) at (1,-1){};
        \node[hide] (b2) at (-1,-3){};
        \node[hide] (r2) at ( 1,-3){};
        \draw[selected red]   (r1) -- (r2);
        \draw[red]            (r) -- (r1);
        \draw[red, longpath]  (r1) -- (r2);
        \draw[selected blue]  (b2) -- (b);
        \draw[blue, longpath] (b) -- (b2);
        \draw[red]            (b) -- (r);
        \draw[selected blue]  (b) -- (x);
        \draw[blue]           (b) -- (x);
        \draw[selected blue]  (x) -- (y);
        \draw[blue]           (x) -- (y);
    \end{scope}
    \begin{scope}[local bounding box=scope22, shift={(+\firstShift+\secondShift,\secondY)}]
        \node (b) at (-1,0){};
        \node (r) at (1,0){};
        \node (x) at (0,\xy){};
        \node (y) at (0,\yy){};
        \node[cross out] (b1) at (-1,-1){};
        \node (b2) at (-1,-2){};
        \node[hide] (b3) at (-1,-3){};
        \node[hide] (r3) at ( 1,-3){};
        \draw[selected blue]  (b3) -- (b2);
        \draw[blue, longpath] (b3) -- (b2);
        \draw[blue          ] (b2) -- (b1);
        \draw[blue          ] (b1) -- (b);
        \draw[selected red]   (r) -- (r3);
        \draw[red, longpath]  (r) -- (r3);
        \draw[selected red]   (r) -- (b);
        \draw[selected red]   (x) -- (b);
        \draw[red]            (r) -- (b);
        \draw[red]            (x) -- (b);
        \draw[blue]           (x) -- (y);
    \end{scope}
    \begin{scope}[black,->]
        \draw[blue]  (scope0) -> (scope1);
        \draw[red]   (scope0) -> (scope2);
        \draw[blue]  (scope1) -> (scope11);
        \draw[red]   (scope1) -> (scope12);
        \draw[blue]  (scope2) -> (scope21);
        \draw[red]   (scope2) -> (scope22);
    \end{scope}
\end{tikzpicture}
    \caption{One step in creating an induced monochromatic $P_n$}
    \label{fig:paths}
\end{figure}
}

Note that the initial edges each span $2$ vertices and in each step only the first edge can lead to a new vertex.
This gives us bound on the number of vertices used in creating an induced path $P_n$
to be at most $2\bigl(2(7n-7)-1\bigr) + 7n-6 = 35n-36$.


\section{Cycles and Induced Cycles}
\label{sec:cycles}
\bgroup 

\def\asOdd{24n-20}
\def\asOddInd{735n-27}
\def\asEvenInd{367n-27}
\def\P{\rho}

In this section, we present a constructive upper bound on the online Ramsey number of cycles $\ro(C_n)$ and induced cycles $\rsio(C_n)$.

\begin{theorem}\label{thm:evencycles}
    Let $C_n$ be a cycle on $n$ vertices, where $n$ is even.
    Then, $\rsio(C_n) \le \asEvenInd$.
\end{theorem}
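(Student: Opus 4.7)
The plan is to build on the induced path strategy of Theorem~\ref{thm:indPaths} and then close the resulting induced path into an induced cycle. Builder's strategy splits naturally into three phases: a reservoir phase, a path-growing phase, and a closure phase.

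In the \emph{reservoir phase}, Builder creates a large collection of isolated edges so that one color (say red) is \emph{abundant} in the sense of Theorem~\ref{thm:indPaths}, with extra reserve vertices set aside for later use in closing the cycle. In the \emph{path-growing phase}, Builder invokes (a variant of) the strategy from Theorem~\ref{thm:indPaths} to grow an induced red path $P$ whose length is a constant multiple of $n$, while also maintaining an induced blue path as a byproduct. Crucially, the endpoints of $P$ are kept as ``anchor'' vertices, not used as internal vertices of any other structure, so that closing edges attached to them cannot create chords inside $P$.

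In the \emph{closure phase}, Builder tries to join the two endpoints of $P$ by a short induced red path that goes through fresh reservoir vertices. Because $n$ is even, Builder can choose the parity of the closure path freely; concretely, Builder closes by adding a path of length $2$ (through a single fresh vertex) after trimming $P$ to length $n-2$, producing an induced $C_n$ whose edges are entirely red. If Painter consistently mis-colors the closure edges, the accumulated blue edges extend the auxiliary induced blue path, which eventually becomes long enough to be closed into an induced blue $C_n$ by the same mechanism. The bookkeeping is done by a potential function analogous to $\rho^s = 3a + 4o$ in Theorem~\ref{thm:indPaths}, but augmented with a term measuring progress toward closure in each color; one shows the potential strictly increases in every round and that Builder wins by the time the potential reaches a threshold linear in $n$.

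The main obstacle is the \emph{induced} condition during closure. Each closure attempt introduces new edges between an anchor endpoint and a reservoir vertex; these edges must not accidentally become chords either of the current red path or of the growing blue structure. This is handled by a strict separation of roles: each reservoir vertex is committed to at most one closure attempt and is discarded if any chord-forming edge appears at it. The constant $367$ arises as the sum of the path-building cost $28n$, amortized over the (constant number of) failed closure attempts that Painter can force before one succeeds, plus the initial reservoir cost; the odd case is handled analogously in the next theorem but with roughly double the overhead, reflecting the extra parity obstruction present for odd $n$.
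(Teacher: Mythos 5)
There is a genuine gap, and it sits exactly at the hardest point of the argument: the closure of the cycle. Your proposal grows a long induced red path and then tries to ``close'' it by a path of length $2$ through a fresh vertex, falling back to a blue structure if Painter refuses. But Painter can always refuse the last edge of a cycle in a given color, so Builder only wins if every single edge offered carries a \emph{simultaneous double threat}: coloring it red makes concrete progress toward a red $C_n$ and coloring it blue makes concrete progress toward a blue $C_n$. Your description never establishes such a threat for the closing edges. The claim that mis-colored closure edges ``extend the auxiliary induced blue path, which eventually becomes long enough to be closed into an induced blue $C_n$ by the same mechanism'' is circular: closing that blue cycle faces the identical obstruction, and the blue edges produced by failed red closures are attached to the red path's endpoints, not to the endpoints of the auxiliary blue path, so they do not extend it. The paper resolves this with two specific devices that are absent from your proposal. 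First, at every step Builder draws \emph{two} edges from the current blue endpoint $v_i$ to two vertices lying at distance $n-2$ apart on a pre-built induced red path; if both are red they instantly complete a red $C_n$ together with that red segment, and otherwise at least one is blue and the blue path grows by one vertex. Second, the final closure of the blue cycle is forced by a pigeonhole: Builder constructs \emph{three} blue paths of length $n/2$ from a common vertex $u$, each forced to end at one of only two designated vertices of a reserved red segment, so two of them must share both endpoints and form a blue $C_n$ (this is also where the evenness of $n$ actually enters, via the integrality of $n/2$ --- not via a free choice of closure parity as you suggest).

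A secondary issue is the accounting. The constant in the bound comes from building a single induced monochromatic path of length $13n$ (cost $28\cdot 13n-27$ by the induced path theorem), carved into nine paths of length about $4n/3$ (three per blue path, to keep the blue paths induced by alternating among a triple) plus one path of length $n-2$ for the closure targets, followed by at most $3n$ further edges. Your ``path-building cost $28n$ amortized over a constant number of failed closure attempts'' does not reproduce this and, more importantly, does not correspond to any strategy whose correctness you have established. The inducedness bookkeeping you describe (anchor endpoints, discarding reservoir vertices) is a reasonable instinct but is moot until the closure mechanism itself is repaired.
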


\begin{proof}
First, Builder obtains disjoint paths $\spread[\P,9]$ of length $4n/3-1$ and one path $\P_{10}$ of length $n-2$.
Instead of using Theorem~\ref{thm:indPaths} to create these paths separately
it is more efficient to create a $P_{13n}$ using at most $28(13n)-27$ edges
and define paths $\spread[\P,10]$ as an induced subgraph of $P_{13n}$.
Let the $P_{13n}$ be \WLOG red.
Let $\P_{i,j}$ denote the $j$-th vertex of $\P_i$.

\newcommand{\PAR}{p}
Builder will create a red $C_n$ using $\spread[\P,10]$ or three blue paths of length $n/2$ starting in $u$ and ending in either $\P_{10,1}$ or $\P_{10,n-1}$.
These three paths starting in the same vertex and two of them sharing a common endpoint will form a blue $C_n$.
Each blue path will go through a separate triple of paths from $\spread[\P,9]$ and
alternate between them with each added vertex.

Let us run the following procedure three times -- once for each $k \in \{1,2,3\}$.
Let $p=\P_{3k-2}$, $q=\P_{3k-1}$ and $r=\P_{3k}$.
Let us define cyclic order of these paths to be $p,q,r,p$ which defines a natural successor for each path.
Builder does the following three steps, which are also depicted in Fig.~\ref{fig:even_cycles_small}.
\begin{enumerate}
    \item Create edges $\{u,p_{1}\}$ and $\{u,p_{n-1}\}$. If both of these edges are red Builder wins immediately. If that is not the case then at least one edge $\{u,v_1\}$ where $v_1\in\{p_{1}, p_{n-1}\}$ is blue.
    \item Now for $i$ from $1$ to $n/2-1$ we do as follows:
\begin{itemize}
    \item Let $j := 2\lfloor{i/3}\rfloor$. Let $t \in \{p,q,r\}$ such that $v_i \in t$ and set $s$ to be the successor of $t$. 
    \item We create edges $\{v_i,s_{j+1}\}$ and $\{v_i,s_{j+n-1}\}$. If both are red Builder wins, otherwise take an edge $\{v_i,v_{i+1}\}$ where $v_{i+1} \in \{s_{j+1}, s_{j+n-1}\}$ is blue.
\end{itemize}
    \item Finish the path $(u,\spread[v,n/2-1])$ by creating edges $\{v_{n/2-1},\P_{10,1}\}$ and $\{v_{n/2-1},\P_{10,n-1}\}$. Again if both edges are red, Builder wins immediately. Otherwise, Builder creates a blue path from $u$ to $\P_{10,1}$ or to $\P_{10,n-1}$.
\end{enumerate}
\begin{figure}[h]
    \centering
    \begin{tikzpicture}[scale=0.45]
        \node[main] (u) at (-3,5) {$u$};
        \def\goal{18};
        \def\mx{20};

        \foreach \s[count=\si from 0] in {0,1,...,\mx}{
            \node[main] (p\si) at (\s,4){};
            \node[main] (q\si) at (\s,2){};
            \node[main] (r\si) at (\s,0){};
        }
        \foreach \s[count=\si from 2, count=\ss from 4] in {2,...,\goal}{
            \node[main] (x\s) at (\ss,-2.5){};
        }
        \foreach \s[count=\si from 2, count=\ss from 3] in {3,...,\goal}{
            \draw[red] (x\si) -- (x\ss);
        }
        \foreach \s[count=\si from 0, count=\ss from 1] in {1,...,\mx}{
            \draw[red] (p\si) -- (p\ss);
            \draw[red] (q\si) -- (q\ss);
            \draw[red] (r\si) -- (r\ss);
        }
        \node[main,draw=gray] (rcorner) at (\mx+1,3) {};
        \node[main,draw=gray] (lcorner) at (-1,1) {};

        \node[hide] () at (-1,4){$p$};
        \node[hide] () at (-1,2){$q$};
        \node[hide] () at (-1,0){$r$};
        \node[hide] () at (3,-2.5){$\rho_{10}$};
        \draw[dotted] (p\mx) -- (rcorner) -- (q\mx);
        \draw[dotted] (q0) -- (lcorner) -- (r0);
        \draw[blue] (u) edge [out=-20,in=140] (p0);
        \draw[dotted] (u) edge [out=-10,in=155] (p16);
        \draw[blue] (p0) -- (q0) -- (r0);
        \draw[dotted] (p0) edge[out=-20,in=160] (q16);
        \draw[blue] (r0) edge[bend right=20] (p2);
        \draw[blue] (p2) -- (q2) -- (r2);
        \draw[blue] (r2) edge[bend right=20] (p4);
        \draw[dotted] (p4) -- (q4);
        \draw[blue] (p4) edge[out=340,in=150] (q20);
        \draw[blue] (q20) edge[out=260,in=40] (x2);
        \draw[dotted] (q20) edge[out=280,in=120] (x\goal);
        \draw[dotted] (r20) -- (\mx+1,0);
        \draw[dotted] (x\goal) -- (\mx+1,-2.5);
    \end{tikzpicture}
    \caption{Creation of $\P_{n/2}$ for $n=18$.}
    \label{fig:even_cycles_small}
\end{figure}
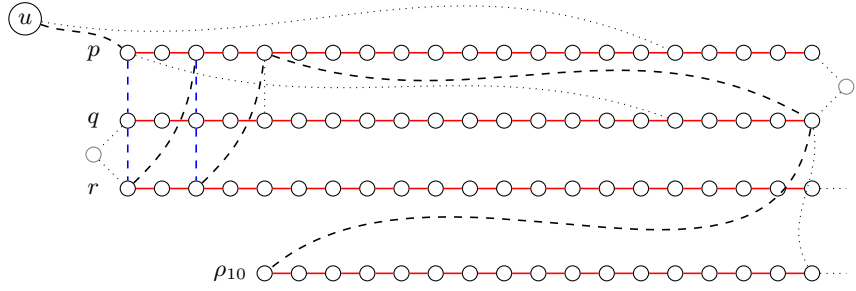
If the final circle is red then it is induced because the initial path is induced
and we neither create edges connecting two vertices of $\P_k$ to itself,
nor edges connecting $v_i$ to any vertices between endpoints of the cycle.
If the blue cycle is created it is induced because we use only odd vertices on $\spread[\P,9]$ for creating the three blue paths and no edges are created between vertices which are further than $1$ apart on these blue paths. 

Note that the length of paths $\spread[\P,9]$ is sufficient
because they need to be at least $2\big\lfloor\frac{n/2 - 1}{3}\big\rfloor+(n-2) \leq \frac{4n - 8}{3} \leq 4n/3-1$.

By Theorem~\ref{thm:indPaths} we can create the initial induced $P_{13n}$ in $28(13n)-27$ rounds.
There are at most $3n$ additional edges, hence $\ro(C_n) \leq \asEvenInd$.
\qed\end{proof}


\begin{theorem}\label{thm:oddcycles}
    Let $C_n$ be a cycle on $n$ vertices, where $n$ is odd.
    Then $\rsio(C_n) \le \rsio(C_{2n})+n \le \asOddInd$.
\end{theorem}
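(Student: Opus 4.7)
The plan is to reduce the odd case to Theorem~\ref{thm:evencycles}. First, Builder spends at most $\rsio(C_{2n}) \le 367\cdot 2n - 27 = 734n-27$ rounds to secure an induced monochromatic $C_{2n}$; by symmetry assume it is red, with vertices $v_0, v_1, \ldots, v_{2n-1}$ in cyclic order. Builder then plays exactly $n$ additional edges designed so that, no matter how Painter responds, an induced monochromatic $C_n$ appears, yielding the overall bound $\rsio(C_n) \le \rsio(C_{2n}) + n \le 735n-27$.

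The $n$ additional edges are chords of $C_{2n}$ all of cyclic length $n-1$. Since $n$ is odd, $n-1$ is even and $\gcd(n-1,2n)=2$, so iterating $i \mapsto i+(n-1) \pmod{2n}$ from $v_0$ gives an orbit of size $n$ consisting precisely of the even-indexed vertices $\{v_0, v_2, \ldots, v_{2n-2}\}$. Builder adds the $n$ chords $e_k = \{v_{k(n-1)\bmod 2n},\, v_{(k+1)(n-1)\bmod 2n}\}$ for $k = 0, 1, \ldots, n-1$; as a graph on the even-indexed vertices these edges form a $C_n$.

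If every $e_k$ is blue, then the subgraph induced on $\{v_0, v_2, \ldots, v_{2n-2}\}$ is exactly this blue $C_n$: no edge of the (induced) red $C_{2n}$ joins two even-indexed vertices, since such vertices are at cyclic distance at least $2$, and no other chord has been drawn. If instead some chord $\{v_a, v_{a+n-1}\}$ is red, the arc $v_a, v_{a+1}, \ldots, v_{a+n-1}$ together with that chord forms a red cycle of length $n$. Its induced-ness reduces to showing that no other added chord has both endpoints inside a window of $n$ consecutive vertices on $C_{2n}$: but any such chord still has cyclic length $n-1$, which equals the diameter of the window, so it can fit only if its endpoints are the two extremes of the arc, hence it must be the chosen chord itself.

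The main obstacle is exactly this induced-ness check in the red case. Everything else (the edge count and the blue case) is immediate once the orbit structure on the even-indexed vertices is set up; the only subtlety is verifying that the $n$ jumps of length $n-1$ do not double-cover any length-$(n-1)$ arc, which is settled by the trivial length comparison above. Summing the two phases gives $\rsio(C_n) \le \rsio(C_{2n}) + n \le 735n - 27$, as claimed.
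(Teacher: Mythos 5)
Your proposal is correct and takes essentially the same route as the paper: build an induced monochromatic $C_{2n}$, then add the $n$ chords of cyclic length $n-1$, which either close a cycle of the first colour together with an arc of the $C_{2n}$, or, since $\gcd(n-1,2n)=2$, form a $C_n$ of the second colour on the even-indexed vertices. Your induced-ness check for the window of $n$ consecutive vertices is in fact spelled out more carefully than in the paper, which dismisses that point as trivial.
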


\begin{proof}
First, we create a monochromatic cycle $C_{2n}$.
Assume \WLOG that this cycle is blue.
Let $c_0,c_1,\dots,c_{2n-1}$ denote vertices on the $C_{2n}$ in the natural order
and let $c_i$ for any $i \ge 2n$ denote vertex $c_j$, $j = i \bmod 2n$.
We join two vertices which lie $n-1$ apart on the even cycle by creating an edge $\{c_0, c_{n-1}\}$.
If the edge is blue it forms a blue $C_n$ with part of the blue even cycle (see Fig.~\ref{fig:odd_cycles}).
If the edge is red we can continue and create an edge $\{c_{n-1},c_{2(n-1)}\}$ and use the same argument.
This procedure can be repeated $n$ times finishing with the edge $\{c_{(n-1)(n-1)},c_{n(n-1)}\}$
where $c_{n(n-1)}=c_0$ because $n-1$ is even.

Let $E$ be all the new red edges we just created, i.e.,
$E=\bigl\{\{c_{i},c_{i+n-1}\} \mid i \in J\bigr\}$ where $J=\bigl\{j(n-1) \mid j \in \{0,1,\dots,n-1\}\bigr\}$.
Since $\gcd(n-1,2n)=2$ it follows that the edges of $E$
complete a cycle $C^{'}_n=\bigl(\{c_0,c_2,\dots,c_{2n-2}\}, E\bigr)$ (see Fig.~\ref{fig:odd_cycles}).

Since the $C_{2n}$ is induced then it follows trivially that the target $C_n$ will be induced as well.

We used Theorem~\ref{thm:evencycles} to create an even cycle $C_{2n}$.
Then we added $n$ edges to form the $C^{'}$.
This gives us an upper bound for induced odd cycles $\rsio(C_n) \le \rsio(C_{2n})+n \le \asOddInd$.
\qed\end{proof}

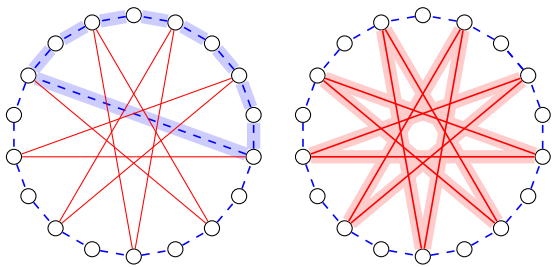
\begin{figure}[h]
    \centering
    \begin{tikzpicture}[scale=1.6]
        \begin{scope}[local bounding box=scope1]
        \node at (0,0){};
        \end{scope}
        \foreach \s[count=\si from 0] in {0,20,...,340}{
            \begin{scope}[shift={($(scope1) +(\s:0)$)}, scale=1,rotate=\s+30]
                \node[draw,circle,inner sep=2pt] (\si) at (1,0){};
            \end{scope}
        }
        \foreach \si in {16,...,23}{
            \draw[selected blue] let \n1={int(mod(\si+1,18))}, \n2={int(mod(\si,18))} in (\n2) -- (\n1);
            \draw[blue] let \n1={int(mod(\si+1,18))}, \n2={int(mod(\si,18))} in (\n2) -- (\n1);
        }
        \draw[selected blue] (16) -- (6);
        \draw[blue] (16) -- (6);
        \foreach \si in {24,...,33}{
            \draw[blue] let \n1={int(mod(\si+1,18))}, \n2={int(mod(\si,18))} in (\n2) -- (\n1);
        }
        \foreach \s[count=\si from 0] in {20,40,...,160}{
            \draw[draw=red] let \n1={int(mod(2*\si,18))},\n2={int(mod(2*\si+8,18))} in (\n1) -- (\n2);
        }
        \end{tikzpicture}
        ~~
        \begin{tikzpicture}[scale=1.6]
        \begin{scope}[local bounding box=scope1]
        \node at (0,0){};
        \end{scope}
        \foreach \s[count=\si from 0] in {0,20,...,340}{
            \begin{scope}[shift={($(scope1) +(\s:0)$)}, scale=1,rotate=\s+30]
                \node[draw,circle,inner sep=2pt] (\si) at (1,0){};
            \end{scope}
        }
        \foreach \s[count=\si from 0] in {0,20,...,360}{
            \draw[blue] let \n1={int(mod(\si+1,18))},\n2={int(mod(\si,18))} in (\n2) -- (\n1);
        }
        \foreach \s[count=\si from 0] in {0,20,...,160}{
            \draw[selected red] let \n1={int(mod(2*\si,18))},\n2={int(mod(2*\si+8,18))} in (\n1) -- (\n2);
        }
        \foreach \s[count=\si from 0] in {0,20,...,160}{
            \draw[red] let \n1={int(mod(2*\si,18))},\n2={int(mod(2*\si+8,18))} in (\n1) -- (\n2);
        }
    \end{tikzpicture}
    \caption{Final step of building $C_9$}
    \label{fig:odd_cycles}
\end{figure}

\subsection*{Non-induced Cycles}
Although the induced cycle strategies are asymptotically tight we can get better constants for the non-induced cycles.
For even cycles, we can use the non-induced path strategy to create the initial $P_{17n/2}$ in $4(17n/2)-3$ rounds.
Then we add $3n/2$ edges in the similar fashion as for the induced cycles
however we can squeeze them more tightly as depicted in the Fig.~\ref{fig:evencyclesaving}.
\begin{figure}[h]
    \centering
    \begin{tikzpicture}[scale=0.5]
        \newcommand{\drawtwopaths}[2]{
            \def\c{#1}
            \def\n{#2}
            \foreach \s[count=\si from 0] in {0,1,...,7}{
                \node[main] (\n\si\c) at (\s,0){};
            }
            \node[hide] at (8,0) {$\dots$};
            \foreach \s[count=\si from 0, count=\ss from 1] in {0,1,...,6}{
                \draw[red] (\n\si\c) -- (\n\ss\c);
            }
        }
        \begin{scope}[shift={(0,2)}] \drawtwopaths{1}{u} \end{scope}
        \begin{scope}[shift={(0,0)}] \drawtwopaths{1}{m} \end{scope}
        \begin{scope}[shift={(0,-2)}] \drawtwopaths{1}{d} \end{scope}
        \begin{scope}[shift={(12,1)}] \drawtwopaths{2}{u} \end{scope}
        \begin{scope}[shift={(12,-1)}] \drawtwopaths{2}{d} \end{scope}

        \draw[draw=black,->] (9,0) -- (11,0);
        \draw[blue] (u01) -- (m01) -- (d01) edge[bend right=20] (u21);
        \draw[blue] (u21) -- (m21) -- (d21) edge[bend right=20] (u41);
        \draw[blue] (u41) -- (m41) -- (d41);

        \draw[blue] (u02) -- (d02)
                 -- (u12) -- (d12)
                 -- (u22) -- (d22)
                 -- (u32) -- (d32);
    \end{tikzpicture}
    \caption{More efficient construction for even non-induced cycles.}
    \label{fig:evencyclesaving}
\end{figure}

Using this method the paths $\spread[\P,6]$ need only $5n/4$ vertices each, therefore the initial path $P_{17n/2}$ is sufficient.
This gives us $\ro(C_n) \leq 71n/2-3$ for even $n$ which directly translates to odd cycles
and gives us $\ro(C_n) \le \ro(C_{2n})+n \le 72n-3$ for odd $n$.

\egroup 

\bgroup 
\def\k{k}
\def\l{\ell}
\def\T{\mathcal{T}}
\def\C{C}
\def\S{S_{\k,\l}}

\section{Tight bounds for a family of trees}
\label{sec:spiders}

\bgroup 
\newcommand{\V}[2]{{P_{{#1},{#2}}}}
\def\Spider{{\sigma}}
\def\P{{\Spider_{\k,\l}}}
\def\PP{\mathbb{P}}
\def\resind{57\k^2\l+28\k^2-\k\l-27}
\def\resspd{\k^2\l+15\k\l+2\k-12}

We first prove a general lower bound for the online Ramsey number of graphs.
It will be used to show the tightness of bounds in this section.

\begin{lemma}\label{lem:lowerboundorn}
The $\ro(H)$ is at least $\mbox{VC}(H) \bigl(\Delta(H)-1\bigr)/2 + |E(H)|$ where $\mbox{VC}(H)$ is the vertex cover
and $\Delta(H)$ is the highest vertex degree in $H$ and $|E(H)|$ is the number of edges.
\end{lemma}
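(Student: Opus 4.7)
The plan is to prove the lower bound by exhibiting an explicit Painter strategy. Write $d := \Delta(H)$ and $k := \mbox{VC}(H)$, and let me describe the strategy before analyzing it.

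The strategy I would use is a simple degree-cap on red: when Builder draws an edge $\{u,v\}$, Painter colors it red if both $u$ and $v$ currently have red degree at most $d-2$, and blue otherwise. By design, no vertex of the background graph ever attains red degree $d$, so the red subgraph has maximum degree at most $d-1$ and therefore cannot contain a copy of $H$. Any monochromatic copy $H'$ of $H$ that Builder ever succeeds in producing must be blue, which accounts for the $|E(H)|$ summand of the claimed bound.

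For the $k(d-1)/2$ summand I would argue as follows. Call a vertex \emph{saturated} once its red degree reaches $d-1$; by the rule, every blue edge has at least one saturated endpoint at the moment Painter colored it, and saturation is permanent because red degrees only increase. In the final blue copy $H'$ the saturated vertices lying inside $V(H')$ therefore cover every edge of $H'$, so there are at least $\mbox{VC}(H')=k$ of them. Each such vertex contributes $d-1$ to the sum of red degrees of the background graph, and since every red edge is counted at most twice in that sum, the number of red edges is at least $k(d-1)/2$. Adding this to the $|E(H)|$ blue edges of the copy yields the claimed lower bound on the total number of edges Builder must draw.

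This argument is essentially linear and I do not expect any serious obstacle; the only point that needs care in the write-up is the double-counting of red edges whose two endpoints happen to be saturated vertices of $V(H')$, which is exactly what the factor $\tfrac{1}{2}$ absorbs. The strategy is oblivious to the structure of $H$ beyond $d$ and $k$, which is both what makes the bound general and what explains why it is tight precisely when the vertex cover of $H$ can be efficiently identified with the saturated vertices — as happens for the spiders treated in the remainder of this section.
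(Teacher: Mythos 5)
Your proof is correct and follows essentially the same approach as the paper: a Painter strategy that caps the degree of one color at $\Delta(H)-1$, forcing the monochromatic copy into the other color and forcing a vertex cover's worth of saturated vertices, each carrying $\Delta(H)-1$ capped-color edges counted with a factor $\tfrac{1}{2}$. The only difference is that you cap red where the paper caps blue, and your write-up of the double-counting step is somewhat more explicit than the paper's.
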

\begin{proof}
Let $deg_b(v)$ be the number of blue edges incident to the vertex $v$.
Let us define the Painter's strategy against the target graph $H$ as:
\begin{enumerate}
    \item if both incident vertices have $deg_b < \Delta(H)-1$ then color the edge blue,
    \item otherwise color the edge red.
\end{enumerate}
It is clear that Builder cannot create $H$ in blue color because the blue graph can contain only vertices with degree at most $\Delta(H)-1$.
To obtain a red edge it has to have at least one incident vertex with high blue degree.
The minimal number of vertices with high blue degree which are required to complete $H$
is the vertex cover of $H$, therefore, Builder has to create at least $\mbox{VC}(H)(\Delta(H)-1)/2$ blue edges.
Then Builder has to create at least $|E(H)|$ edges to complete the target graph in red color.
\qed\end{proof}

Let us define a \emph{spider} $\P$ for $\k \ge 3$ and $\l \ge 2$ as a union of $\k$ paths of length~$\l$
that share exactly one common endpoint.
Let a \emph{center} of $\P$ denote the only vertex with degree equal to $\k$.

In the following theorem we obtain an upper bound on $\ro(\P)$
that asymptotically matches the lower bound from Lemma~\ref{lem:lowerboundorn}.
\begin{theorem}\label{thm:induced spiders}
$\rsio(\P) = \Theta(\k^2 \l)$.
\end{theorem}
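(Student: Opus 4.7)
The lower bound is a direct application of Lemma~\ref{lem:lowerboundorn}. For $H = \P$ we have $\Delta(H) = \k$ and $|E(H)| = \k\l$. Each of the $\k$ legs of $\P$ is a path on $\l+1$ vertices whose vertex cover has size $\lceil \l/2 \rceil$, and even putting the center into the cover saves at most one vertex overall, so $\mbox{VC}(\P) \ge \k\lceil \l/2 \rceil - 1 = \Omega(\k\l)$. Substituting into the lemma yields
\[
\rsio(\P) \ge \ro(\P) \ge \mbox{VC}(\P)(\k-1)/2 + \k\l = \Omega(\k^2 \l),
\]
establishing the lower-bound half of the theorem.

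For the matching upper bound my plan is a three-phase Builder strategy. In Phase~1, I use Theorem~\ref{thm:indPaths} to force an induced monochromatic path $P$ of length $L = \Theta(\k^2\l)$ in $28L-27 = O(\k^2\l)$ rounds; without loss of generality $P$ is red. In Phase~2, I partition $P$ into $2\k^2$ pairwise-separated induced sub-paths $R_1,\dots,R_{2\k^2}$ of length $\l$, using single discarded ``buffer'' vertices on $P$ between them so that the union of any subcollection is still an induced subgraph of the background graph. I then pick a fresh vertex $u$ and, for each $i$, draw the edge $\{u,r_i\}$ to an endpoint $r_i$ of $R_i$. If at least $\k$ of these $2\k^2$ edges come back red, I immediately win with an induced red $\P$ centered at $u$: truncate the $\k$ chosen $R_i$'s to length $\l$; inducedness holds because the $R_i$'s are pairwise separated by buffers and $u$'s other incidences lie outside the spider.

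If fewer than $\k$ attachment edges are red, then at least $2\k^2 - \k + 1 \ge \k$ are blue, and I pick $\k$ blue-attached endpoints $r_{i_1},\dots,r_{i_\k}$. In Phase~3, at each $r_{i_j}$ I run $\k-1$ independent applications of a root-anchored variant of Theorem~\ref{thm:indPaths} on pairwise-disjoint fresh vertex sets, each producing in $O(\l)$ rounds an induced monochromatic path of length $\l$ with $r_{i_j}$ as an endpoint. A colour dichotomy closes the game: if some $r_{i_j}$ receives all $\k-1$ of its extensions in red, then together with $R_{i_j}$ (itself a red length-$\l$ path rooted at $r_{i_j}$) I obtain $\k$ induced red legs at $r_{i_j}$ and hence an induced red $\P$; otherwise every $r_{i_j}$ receives at least one blue extension, and concatenating that blue extension with the blue edge $\{u,r_{i_j}\}$ gives an induced blue leg of length $\l+1$ at $u$ (truncate to $\l$), so collecting one such leg per $j$ yields an induced blue $\P$ centered at $u$. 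Totalling the three phases gives $O(\k^2\l)$ rounds, matching the lower bound and hence proving $\rsio(\P) = \Theta(\k^2\l)$.

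The principal obstacle is realizing the root-anchored variant of Theorem~\ref{thm:indPaths} invoked in Phase~3: the original strategy maintains the red and blue partial induced paths at \emph{distinct} endpoints, whereas here the winning monochromatic path must use the prescribed vertex $r_{i_j}$ as an endpoint. I would initialize the sub-game's red path as the length-$0$ path at $r_{i_j}$ and its blue path at a fresh dummy vertex, so a red win is automatically rooted at $r_{i_j}$; in the blue-winning case, a short additional case analysis connects the blue path back to $r_{i_j}$ by one more edge (either giving the desired blue leg at $u$, or producing an additional red edge at $r_{i_j}$ that can be absorbed into the outer red-spider case). The remaining checks, namely that each phase stays induced and that vertex sets across different sub-games are disjoint, are routine bookkeeping following from Phase~2's buffer vertices and Phase~3's fresh-vertex allocation.
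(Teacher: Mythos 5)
Your lower bound coincides with the paper's: both invoke Lemma~\ref{lem:lowerboundorn} with $\Delta(\sigma_{k,\ell})=k$, $|E(\sigma_{k,\ell})|=k\ell$ and $\mbox{VC}(\sigma_{k,\ell})=\Theta(k\ell)$, so that half is fine. The upper bound, however, has a genuine gap at exactly the point you flag as ``the principal obstacle'': Phase~3 requires, for each blue-attached root $r_{i_j}$, a way to force $k-1$ induced monochromatic paths of length $\ell$ anchored at the prescribed vertex $r_{i_j}$, and the proposed fix (initialize the red path of the Theorem~\ref{thm:indPaths} game as the length-$0$ path at $r_{i_j}$) does not work. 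In that strategy the red path is modified at its active end $r_r$: in the cases where Painter colors $e$ blue, the red path either loses $r_r$ and $r_{r-1}$, or $r_r$ is absorbed as an \emph{interior} vertex of the blue path (the case $B^{s+1}=B^s$ concatenated with $(e,r_r,f,x,g,y)$). Starting from a length-$0$ red path, the very first such step already destroys the anchor: $r_{i_j}$ is deleted from the red path or swallowed into the blue one, the red path restarts at a fresh vertex, and a red win is no longer rooted at $r_{i_j}$. The blue-win case is no better: the blue path lives on fresh vertices, and reattaching it to $r_{i_j}$ (or to $u$) costs one more adversarially colored edge per leg; a red answer there cannot simply be ``absorbed into the outer red-spider case,'' because your dichotomy needs $k-1$ red legs at a \emph{single} root to win red, while these stray red edges arrive one per sub-game. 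So the claim that each sub-game yields an induced monochromatic $P_\ell$ with endpoint $r_{i_j}$ in $O(\ell)$ rounds is unsupported, and it is the heart of your argument.

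The paper avoids the anchored-path subroutine entirely by reversing which color is pre-built. It forces one long induced monochromatic path (\WLOG blue), cuts it into $k^2$ induced copies of $P_{2\ell}$ grouped into $k$ blocks of $k$ paths, and grows each red leg from the hub $u$ one vertex at a time: the current red endpoint is offered $k-1$ candidate successors, one at the appropriate depth on each of the other pre-built blue paths of its block. If Painter colors all $k-1$ offers blue, the current vertex is immediately the center of a blue spider whose legs are tails of the pre-built blue paths; otherwise some offer is red and the red leg advances. The monochromatic dichotomy is thus resolved at every single edge rather than at the level of whole sub-games, and inducedness follows from the block structure. To rescue your outline you would need to replace the root-anchored path game by a step of this ``extend into pre-built paths of the opposite color'' type.
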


\begin{proof}
We describe Builder's strategy for obtaining an induced monochromatic $\P$.
We start by creating an induced monochromatic path of length $\k^2 (2\l+1)$ which is \WLOG blue.
This path contains $\k^2$ copies of $P_{2\l}$ as an induced subgraph.
Let $\V{i}{j}$ denote the $j$-th vertex on path $P_i$.
Let $\spreadu[\PP,\k]$ be $\k$ sets where each contains $\k$ disjoint induced paths.
Let $u$ be a previously unused vertex.
Now for each $\PP^j$ we do the following procedure:

\begin{enumerate}
    \item Let $\{\spreadu[P,\k]\}=\PP^j$.
    \item
        Create edges $\bigl\{ \{ u, w \} \mid w \in \{\spreadu[P_1,\k]\} \bigr\}$.
        If there are $\k$ blue edges there is a $\P$ with the center in $u$.
        If that is not the case there is at least one red edge $e^1=\{u,v^1\}$ where $v^{1} \in \{\spreadu[P_1, \k]\}$.
    \item For $i$ from $2$ to $\l$ we do as follows.
        \begin{itemize}
            \item
                For $v^{i-1} \in P^z$ create edges $\bigl\{\{v^{i-1},w\} \mid w \in \{\spreadu[P_i,\k]\} - P_i^z\bigr\}$.
                If all of these edges are blue we have a $\P$ with the center in $v^{i-1}$,
                otherwise there is a red edge $\{v^{i-1}, v^{i}\}$ where $v^{i} \in \{\spreadu[P_i, \k]\}$.
        \end{itemize}
    \item We obtained a red induced path $L^j = \bigl(u,\{u,v^1\},\dots,v^\l\bigr)$.
\end{enumerate}

If all iterations end up in obtaining a path $L^j$ we have $\k$ induced paths of length~$\l$
which all start in $u$ and together they form a $\P$ with the center in $u$.

We built a path $P_{\k^2(2\l+1)}$ using Theorem~\ref{thm:indPaths} using at most $28\bigl(\k^2(2\l+1)\bigr)-27$ edges.
During iterations, we created at most $\k\l(\k-1)$ edges.
Therefore we either got a blue $\P$ during the process or a red $\P$ after using no more than $\rsio(\P) \leq \resind = O(\k^2\l)$ rounds.

The lower bound of Lemma~\ref{lem:lowerboundorn} gives us $\Omega(\k^2\l)$ therefore the $\rsio(\P)=\Theta(\k^2\l)$.
\qed\end{proof}

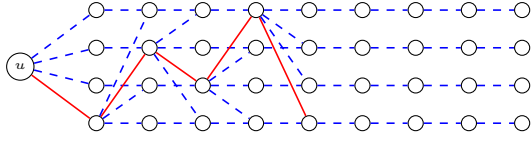
\begin{figure}[h]
    \centering
    \begin{tikzpicture}[scale=0.5]
    \tikzstyle{every node}=[font=\tiny, main]
    \begin{scope}[local bounding box=scope0, shift={(-6,0)}]
        \def\h{3}
        \def\w{8}
        \node[main] (s) at (-2, 0.5*\h) {$u$};
        \foreach \y in {0,...,\h}{
            \foreach \x in {0,...,\w}{
                \node[main] (\x\y) at (1.4*\x,\y){};
            }
            \foreach \t[count=\f from 0] in {1,...,\w}{
                \draw[blue] (\f\y) -- (\t\y);
            }
        }
        \def\wp{4}
        \draw[red ] (s)  -- (00);
        \draw[blue] (s)  -- (01);
        \draw[blue] (s)  -- (02);
        \draw[blue] (s)  -- (03);

        \draw[blue] (00) -- (11);
        \draw[red ] (00) -- (12);
        \draw[blue] (00) -- (13);

        \draw[blue] (12) -- (20);
        \draw[red ] (12) -- (21);
        \draw[blue] (12) -- (23);

        \draw[blue] (21) -- (30);
        \draw[blue] (21) -- (32);
        \draw[red ] (21) -- (33);

        \draw[red ] (33) -- (40);
        \draw[blue] (33) -- (41);
        \draw[blue] (33) -- (42);

    \end{scope}
\end{tikzpicture}
    \caption{Building one red leg of a spider $\Spider_{4,5}$.}
    \label{fig:spider}
\end{figure}

We can get the bound on non-induced spiders in a similar way,
however, we can use several tricks to get a bound which is not far from the lower bound.

\begin{theorem}\label{thm:spiders}
$\ro(\P) \le \resspd = O(\k^2 \l)$.
\end{theorem}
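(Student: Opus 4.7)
The plan is to adapt the strategy from Theorem~\ref{thm:induced spiders} by exploiting three relaxations that the non-induced setting permits. First, the bound $\ro(P_n)\le 4n-3$ of Grytczuk--Kierstead--Pra\l{}at can replace the induced path bound $\rsio(P_n)\le 28n-27$ used in that proof, a sevenfold saving on the reservoir. Second, each auxiliary sub-path inside the reservoir shrinks from the $2\ell$-length version used there to roughly length~$\ell$, since the extra spacing was only needed to preserve inducedness between the final spider vertices. Third, the $k^2$ mutually induced sub-paths of the induced argument may be merged into just $k$ longer ``tracks'' reused across all legs, because stray blue edges between different spider legs are now harmless.

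Concretely, Builder first invokes Grytczuk--Kierstead--Pra\l{}at to produce a monochromatic reservoir path of length linear in $k\ell$, coloured blue without loss of generality, and splits it into $k$ equal-length tracks $T^1,\dots,T^k$. Builder then fixes a fresh vertex $u$ as the putative centre of a red $\sigma_{k,\ell}$, and for each $j\in\{1,\dots,k\}$ runs the leg-building subroutine from Theorem~\ref{thm:induced spiders}. At step $i$ of leg $j$, Builder probes $k-1$ fresh edges from the current endpoint $v^{j,i-1}$ to the depth-$((j-1)\ell+i)$ vertex of every track other than the one presently containing $v^{j,i-1}$. A fully blue answer already exhibits a blue $\sigma_{k,\ell}$ centred at $v^{j,i-1}$, with its $k-1$ new blue edges prolonged along the other tracks and its remaining leg running forward along the occupied track; a red answer extends the red leg by one vertex. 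After all $k$ iterations the concatenation $(u,v^{j,1},\dots,v^{j,\ell})_{j=1}^{k}$ forms a red $\sigma_{k,\ell}$ centred at $u$.

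Summing the costs yields $k\bigl((k-1)\ell+1\bigr)=k^2\ell-k\ell+k$ edges for the spider phase and at most $16k\ell+k-12$ edges for the reservoir path, matching the claimed total $k^2\ell+15k\ell+2k-12$. The main obstacle will be to design the depth-indexing inside each track so that a reservoir of only $\Theta(k\ell)$ vertices is simultaneously long enough to guarantee that the forward blue suffix from any potential centre $v^{j,i-1}$ has length at least~$\ell$, while also ensuring that the $k\ell$ red-leg vertices remain pairwise disjoint across different legs. I expect this to require a careful interleaving of the depth slots used by different legs within a single track, together with a case analysis checking that every branch of the game tree either terminates with a blue $\sigma_{k,\ell}$ on the reservoir or advances the red spider by exactly one vertex.
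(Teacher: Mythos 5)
Your high-level plan --- reuse one reservoir across all $k$ legs and pay for it with the cheaper non-induced path bound $4n-3$ --- is exactly the paper's approach, but the parameter choice of $k$ tracks inside a reservoir of total length $\Theta(k\ell)$ cannot be made to work, and the obstacle you flag at the end is not something a cleverer interleaving of depth slots can remove. The problem is that Painter, not Builder, decides which of the $k-1$ probed tracks absorbs each new red-leg vertex: by answering red on (say) track $T^1$ whenever $T^1$ is probed, Painter places roughly half of every leg there (consecutive red-leg vertices merely have to lie on different tracks), so about $k\ell/2$ vertices in total land on $T^1$. Every probe target must be a vertex not already on a red leg (else two red legs intersect) and must lie at least $\ell$ vertices from the far end of its track (else an all-blue answer does not yield a blue leg of length $\ell$), so a track of length $O(\ell)$ is exhausted after $O(\ell)$ red answers --- long before the $k$ legs are finished once $k$ is large. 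And with only $k$ tracks your scheme \emph{must} probe every other track at every step, because the blue spider needs one leg from each of the $k-1$ other tracks plus one along the centre's own track; there is no slack to skip an exhausted track. Your own fallback, indexing depths up to $(j-1)\ell+i\le k\ell$, concedes the point: it forces each track to have at least $k\ell+\ell$ vertices, hence a reservoir of length $\Theta(k^2\ell)$ costing about $4k^2\ell$ edges, which contradicts both the $16k\ell+k-12$ you charge for the reservoir and the stated bound $k^2\ell+15k\ell+2k-12$.

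The paper resolves this not by interleaving but by over-provisioning the \emph{number} of tracks while keeping them short: it builds $P_{4k\ell}$ in $4(4k\ell)-3$ rounds, splits it into $2k$ paths of length $2\ell$, and at each step probes $k-1$ tracks chosen greedily among those that still contain an unused vertex with enough forward room. Since only $k\ell$ vertices ever become red-leg vertices and each track offers about $\ell$ admissible positions, fewer than $k$ tracks can ever be exhausted, so at least $k-1$ usable tracks other than the current one always remain. If you replace your $k$ tracks with predetermined depths by $2k$ short tracks with greedy target selection, the remainder of your argument (the per-step case analysis and the edge count for the spider phase) goes through essentially as in the paper.
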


\begin{proof}
We create a path $P_{4\k\l}$ using strategy by Grytczuk et al.~\cite{Grytczuk2008OnlineRN}
in $4(4\k\l)-3$ rounds and split it into $2\k$ paths of length~$2\l$.
We follow the same strategy as in the induced case,
however, we work over the same set of paths in all iterations and we exclude those vertices which are already used by some path.
Choosing $2\k$ paths guarantees that we have big enough set even for the last iteration.
We create $2\k$ edges from $u$ and then we use $\k\l(\k-1)$ to create the red paths.
We either get a blue $\P$ in the process or a red $\P$ after using no more than $\resspd$ rounds.
\qed\end{proof}

\egroup 

\section{Family of induced trees with an asymptotic gap}
\label{sec:trees}
\newcommand{\odeg}{\overline{\deg}}
\newcommand{\beforerestrees}{426\k\l-444\k+280\l-296}
\newcommand{\restrees}      {426\k\l-442\k+308\l-295} 
In 2009 Conlon~\cite{Conlon2009OnlineRN} showed that the online Ramsey number and the size-Ramsey number
differ asymptotically for an infinite number of cliques.
In this section, we present a family of trees which exhibit the same property, i.e.,
their induced online Ramsey number and size-Ramsey number differ asymptotically.

\begin{definition}
Let the centipede $\S$ be a tree consisting of a path $P_\l$ of length~$\l$ where each of its vertices is center of star $S_k$,
i.e., a thorn-regular caterpillar.
\end{definition}

Note that $\S$ has $(\k+1)(\l+1)$ vertices and its maximum degree is $\k+2$.
We will show that $\S$ exhibits small induced online Ramsey number.

%

\newcommand{\QR}{Q_r}
\newcommand{\QB}{Q_b}

First, we need some ``degree-type'' notion.
Let $G = (V,E)$ be a graph whose edges are colored red and blue.
Let $U \subseteq V$.
For a vertex $v \in V$ let $\odeg(v,U)$ be a degree outside $U$.
Formally, $\odeg(v,U) = |N(v) \setminus U|$, where $N(v)$ is a neighborhood of $v$.
Let $\odeg_b(v,U)$ and $\odeg_r(v, U)$ be a vertex degree outside $U$ in blue or red color, respectively.
I.e.,
\[
    \odeg_b(v,U) = \Bigl|\bigl\{ u \in N(v) \setminus U : \{u,v\} \text{ is a blue edge} \bigr\}\Bigr|.
\]
and similarly for $\odeg_r(v,U)$.

A \emph{center} of a star $S_k$ is the vertex of degree $k$.
A \emph{center} of union of stars are centers of all stars in the union.
A \emph{colorful star} is a star such that for its center $v$ holds
that $\odeg_b(v) \geq k$ and $\odeg_r(v) \geq k$.
Let $H$ be a centipede or a union of stars. We denote a center of $H$ by $c(H)$.

\begin{theorem}\label{thm:trees}
$\rsio(\S) \le \restrees = O(\k\l)$.
\end{theorem}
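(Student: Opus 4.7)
The plan is to prove the bound by combining the induced-path builder of Theorem~\ref{thm:indPaths} with a leaf-attachment probing phase, in the spirit of the spider construction of Theorem~\ref{thm:induced spiders}. Builder first secures a long induced monochromatic path that serves as a pool of candidate backbone vertices for the centipede, then attaches $k$ thorns at each chosen backbone vertex by probing with fresh edges. The backbone of $\S$ is linear, so the extra care (compared with the spider proof) goes into ensuring that the chosen backbone vertices appear consecutively and that the induced property is preserved throughout.

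Concretely, Builder first invokes Theorem~\ref{thm:indPaths} to produce an induced monochromatic path $P$ of length $\Theta(k\ell)$ (taking $P$ to be red \WLOG), which costs roughly $28\cdot 15\,k\ell$ edges and accounts for the leading $426\,k\ell$ in the final bound. Walking along $P$, Builder attaches candidate thorns at each vertex $v$ by issuing $2k-1$ edges from $v$ to previously unused vertices. These fresh endpoints preserve induced-ness automatically, and by pigeonhole every $v$ becomes either \emph{red-ready} (acquires at least $k$ red leaves) or \emph{blue-ready} (acquires at least $k$ blue leaves), giving the $k$ thorns of a potential backbone vertex in one color. If some run of $\ell+1$ consecutive vertices on $P$ is entirely red-ready, Builder reads off an induced red $\S$: the corresponding subpath of $P$ is the backbone, and the red leaves serve as the thorns, with induced-ness guaranteed by the freshness of the leaves and by $P$ being induced.

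Otherwise we are in the blue case, where every length-$(\ell+1)$ window of $P$ contains a blue-ready vertex; thus $P$ carries a dense set of vertices each anchoring a blue $S_k$ on fresh leaves. The difficulty is that these blue centers lie on the red path $P$, so Builder still has to form a blue induced backbone through $\ell+1$ of them by adding new edges. Here the colorful-star notion introduced just before the theorem is the key tool: by probing additional fresh edges at selected blue-ready vertices so that they become centers of colorful stars (with both $\geq k$ fresh red and $\geq k$ fresh blue neighbors outside the already-built structure), Builder retains the flexibility to commit those vertices to either color's centipede after the fact. Linking edges are then added between well-spaced colorful-star centers; each Painter response either advances the blue backbone (if colored blue) or is amortized against the surplus length of $P$ to enlarge the pool of consecutive red-ready vertices (if colored red), so that eventually one of the two target centipedes is completed.

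The main obstacle is exactly the blue case: constructing a blue induced backbone through vertices already lying on a red induced path, while preserving the induced property and staying within an $O(k\ell)$ edge budget. A naive chord-building between blue-ready vertices can create red chords that both fail to advance the blue backbone and threaten induced-ness, so the linking procedure must be designed so that every Painter response yields measurable progress toward at least one of the two candidate centipedes. The quantitative heart of the proof is to calibrate the initial path length, the per-vertex probe count, and the per-step linking budget so that these amortized gains sum to the explicit bound $\restrees$ stated in the theorem.
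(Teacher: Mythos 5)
Your plan diverges substantially from the paper's proof, and it has a gap that is fatal to the stated bound. The quantitative problem comes first: you build an induced monochromatic path $P$ with $\Theta(k\ell)$ vertices and then probe \emph{every} vertex of $P$ with $2k-1$ fresh edges. That probing phase alone costs $\Theta(k^2\ell)$ edges, which is the same order as the size-Ramsey lower bound $\rf(S_{k,\ell})=\Omega(k^2\ell)$ derived after the theorem; the entire point of the theorem (and of Corollary~\ref{cor:trees}) is to beat that order. The phase cannot be recalibrated within your framework: if you shorten $P$ to $O(\ell)$ vertices to keep the probing cost at $O(k\ell)$, then the dichotomy ``either $\ell+1$ consecutive red-ready vertices or many blue-ready vertices'' collapses --- the absence of a red run of length $\ell+1$ among $O(\ell)$ vertices yields only $O(1)$ blue-ready vertices, far short of the $\ell+1$ needed for a blue backbone, and guaranteeing $\ell+1$ blue-ready vertices would force $|P|=\Theta(\ell^2)$ and hence $\Theta(k\ell^2)$ probes. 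A non-adaptive ``probe everything'' phase simply cannot be made to fit the budget. The second gap is that the blue case, which you correctly identify as the main obstacle, is only described, not solved. The proposed amortization --- a red answer to a linking edge ``enlarges the pool of consecutive red-ready vertices'' --- does not hold: a red chord between two colorful-star centers sitting on $P$ extends no run of consecutive red-ready vertices, and worse, it is a chord between two candidate backbone vertices, so it threatens the induced-ness of whichever centipede eventually contains both endpoints.

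For contrast, the paper never builds a long monochromatic path first. It grows a red centipede and a blue centipede \emph{simultaneously}, spending only one or two edges per step: a fresh vertex $x$ is joined to the current red backbone endpoint and (if that edge is red) to the blue backbone endpoint, and only when Painter's answers guarantee that $x$ will extend one of the centipedes does Builder launch the $\Theta(k)$ burst of probes at $x$ --- so every expensive burst is paid for by a unit of real progress. A potential function (weighting backbone vertices, colorful stars, and outside-degrees) certifies that the potential rises by at least the number of edges drawn, which is where the $426k\ell$ actually comes from. Colorful stars accumulate as a by-product of ``failed'' steps, and only at the very end does Builder run the induced-path strategy of Theorem~\ref{thm:indPaths} on $35\ell-36$ \emph{independent} colorful-star centers, so that whichever color the resulting backbone takes, each center already owns $k$ fresh leaves of that color. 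Your instinct to use colorful stars as color-flexible backbone candidates is exactly right, but they must be harvested adaptively at $O(1)$ amortized cost per edge, not by blanket probing of a pre-built path, and the backbone through them must be obtained by the induced-path strategy on an independent set rather than by linking vertices that already lie on a monochromatic path.
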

%

\begin{proof}

We will proceed in steps where each step will get us closer to getting the result.
Let a superscript $X^i$ of any set $X$ denote the state of the set in $i$-th step.
Also, let $X^{i+1}=X^i$ if not mentioned otherwise.

We will gradually build two centipedes (one red, one blue) and a set of colorful stars.
Let $R^i$ ($B^i$) be a red (blue) centipede in the step $i$. 
First, we assume that both $R^i$ and $B^i$ are nonempty. 
We show later a strategy for the case $R^i$ or $B^i$ is empty (i.e., centipede of length 0).

Let $\QR^i$ be a union of colorful stars such that for each star $S \in \QR^i$ holds that $c(S) \in c(R^j)$ for some $j < i$, i.e. the center of $S$ were in the center of the red centipede in some previous step.
The $\QB^i$ is defined similarly, i.e. it is a union of colorful stars such that for each star $S \in \QB^i$ holds that $c(S) \in c(B^j)$ for some $j < i$.
Let $U^i = R^i \cup B^i \cup \QR^i \cup \QB^i$.
For $v \in c(R^i)$ let $\odeg_o(v)$ be $\odeg_b(v, U^i)$, i.e. blue degree of $v$ outside centipedes and colorful stars.
Similarly, let $\odeg_o(v)$ be $\odeg_r(v, U^i)$ for $v \in c(B^i)$.
Each step we either make one centipede longer by $1$, add one colorful star to $\QR$ or $\QB$ or increase $\odeg_o(v)$ of $v \in c(R^i) \cup c(B^i)$.
One step will proceed as follows:

\begin{enumerate}
    \item Let $u$ and $v$ be endpoints of $c(R^i)$ and $c(B^i)$ respectively.
    \item Create an edge $e=\{u,x\}$ where $x$ is previously unused vertex.
    \item If $e$ is blue set $w:=u$, if $e$ is red create an edge $f=\{v,x\}$ and set $w:=v$.
    \item \label{case:three} Perform one of the following:
        \begin{enumerate}
            \item \label{case:forkok} If $e$ is red and $f$ is blue,
	    create edges from $x$ until $k$ of them are in the same color and then add $x$ to respective centipede center set.
            \item \label{case:forknok} Either $e$ is blue, or both $e$ and $f$ are red,
                \begin{enumerate}
                    \item \label{case:forknoksmall} if $\odeg_o(w)<k$, the $\odeg_o(w)$ was increased by $1$,
                    \item \label{case:forknokbig} or $\odeg_o(w) \geq k$, we have a colorful star with center in $w$,
		    therefore we move $w$ from its centipede center set to respective colorful star set, i.e.,
                    $c(\QR^{i+1})=c(\QR^i) \cup \{u\}$ and $c(R^{i+1})=c(R^i) - u$ if $w=u$,
		    or $c(\QB^{i+1})=c(\QB^i) \cup \{v\}$ and $c(B^{i+1})=c(B^i)-v$ if $w=v$.
                \end{enumerate}
        \end{enumerate}
\end{enumerate}
See Figure~\ref{fig:star_star} for clarification of various cases during one step.
\begin{figure}[h]
    \centering
    \begin{tikzpicture}[
            every node/.style = {main},
            level 1/.style = {sibling distance=1em},
            level distance = 2.4em,
            scale=0.68,
            font=\scriptsize,
        ]
        \newcommand{\smallstar}[5]{
            \begin{scope}[grow=#2, shift={#4}]
                \node (#3) {#5}
                    child {node {} edge from parent [#1] }
                    child {node {} edge from parent [#1] }
                    child {node {} edge from parent [#1] }
                ;
            \end{scope}
        }
        \newcommand{\partcatepillar}{
            \smallstar{red}{left}{u}{(0,2)}{$u$}
            \smallstar{red}{left}{u1}{(0,1)}{}
            \node[hide] (u2) at (0,0){};
            \draw[selected red] (u) -- (u1);
            \draw[red] (u) -- (u1);
            \draw[selected red] (u1) -- (u2);
            \draw[red,longpath] (u1) -- (u2);
            \smallstar{blue}{right}{v}{(1,2)}{$v$}
            \smallstar{blue}{right}{v1}{(1,1)}{}
            \node[hide] (v2) at (1,0){};
            \draw[selected blue] (v) -- (v1);
            \draw[blue] (v) -- (v1);
            \draw[selected blue] (v1) -- (v2);
            \draw[blue,longpath] (v1) -- (v2);
        }
        \begin{scope}[local bounding box=a, shift={(0,0)}]
            \draw[decorate,decoration={brace,amplitude=3pt},xshift=0pt,yshift=6pt] (-1,0.3) -- (-1,1.25) node [hide,midway,xshift=-0.3cm] {$k$};
            \partcatepillar
            \node (x) at (0.5,2.8){};
            \draw[tobe] (u) -- (x) -- (v);
        \end{scope}
        \begin{scope}[local bounding box=b, shift={(6,2)}]
            \partcatepillar
            \begin{scope}[grow=up, shift={(0.5,2.8)}]
                \node (x) {$x$}
                    child {node {} edge from parent [tobe] }
                    child {node {} edge from parent [tobe] }
                    child {node {} edge from parent [tobe] }
                    child {node {} edge from parent [tobe] }
                    child {node {} edge from parent [tobe] }
                ;
            \end{scope}
            \draw[red]  (u) -- (x);
            \draw[blue] (x) -- (v);
            \draw[decorate,decoration={brace,amplitude=3pt},xshift=0pt,yshift=6pt] (-0.35,3.5) -- (1.35,3.5) node [hide,midway,yshift=0.25cm] {$k+m$};
            \node[hide,text width=7em,inner sep=0.5em] () [right of=b,anchor=west,shift={(0,0.8)}] {one centipede gets longer by $1$};
        \end{scope}
        \begin{scope}[local bounding box=c, shift={(6,-2)}]
            \partcatepillar
            \node (x) at (0.5,2.8){};
            \draw[tobe] (u) -- (x);
            \draw[red] (x) -- (v);
        \end{scope}
        \begin{scope}[local bounding box=d, shift={(13,-2)}]
            \smallstar{red}{left}{u}{(0,2)}{$u$}
            \smallstar{red}{left}{u1}{(0,1)}{}
            \node[hide] (u2) at (0,0){};
            \draw[selected red] (u) -- (u1);
            \draw[red] (u) -- (u1);
            \draw[selected red] (u1) -- (u2);
            \draw[red,longpath] (u1) -- (u2);
            \smallstar{blue}{right}{v1}{(1,1)}{}
            \begin{scope}[local bounding box=dd, grow=right, shift={(1.2,2.8)}]
                \node (v) {$v$}
                    child {node {} edge from parent [blue] }
                    child {node {} edge from parent [blue] }
                    child {node {} edge from parent [blue] }
                    child {node {} edge from parent [red] }
                    child {node {} edge from parent [red] }
                    child {node (x) {} edge from parent [red] }
                ;
            \end{scope}
            \node[hide] (v2) at (1,0){};
            \draw[blue] (v) -- (v1);
            \draw[selected blue] (v1) -- (v2);
            \draw[blue,longpath] (v1) -- (v2);
            \draw[dotted] (u) edge [bend left] (x);
            \node[draw=black,rectangle, dashed,fit=(dd),rounded corners=0.5em,inner sep=0.5em](box) {};
            \node[hide,text width=6em] () [above of=box,anchor=south]{put the colorful star to $\QB$};
        \end{scope}
        \begin{scope}[local bounding box=lc, shift={(10.5,1.8)}]
            \node[hide] () {$\odeg_o(v)$ increased by $1$};
        \end{scope}
        \draw[->] (2.5,2) -- node[above,hide,sloped]{case~\ref{case:forkok}} (4.5,3.5);
        \draw[->] (2.5,1) -- node[above,hide,sloped]{case~\ref{case:forknok}}
                             node[below,hide,sloped]{w.l.o.g. $w:=v$} (4.5,-0.5);
        \draw[->] (8.5,0) -- node[above,hide,sloped]{case~\ref{case:forknoksmall}}
                             node[below,hide,sloped]{$\odeg_o(v) < k$} (10.5,1.3);
        \draw[->] (8.5,-1) -- node[above,hide]{case~\ref{case:forknokbig}}
                                node[below,hide]{$\odeg_o(v) \geq k$} (11.8,-1);
    \end{tikzpicture}
    \caption{One step of building a $\S$ where $k=3$}
    \label{fig:star_star}
\end{figure}

Let $p^i$ be a potential in step $i$ defined as
\[
    p^i=\bigl(|c(R^i)|+|c(B^i)|\bigr)\bigl(k+2\bigr)
       + \bigl(|c(\QR^i)|+|c(\QB^i)|\bigr)\bigl(3k+2\bigr)
       + 2\smashoperator{\sum_{v \in c(R^i) \cup c(B^i)}}{\odeg_o(v)}.
\]
Note that for all the outcomes of one step the potential will increase by at least the number of created edges.
\begin{itemize}
    \item In case~\ref{case:forkok} we create $2+k+m$ edges. $k+1$ edges extend one centipede by one star, one edge is not used, and $m \leq k-1$ edges are additional edges of the other color on that star. Extending one centipede by a star with $m$ edges in other color increases $p$ by $(k+2)+2m$.
    \item In case~\ref{case:forknoksmall} we create at most $2$ edges, increasing $\odeg_o$ of one vertex by one, which increases $p$ by $2$.
    \item In case~\ref{case:forknokbig} we create at most $2$ edges, making one centipede shorter by one, however adding one colorful star to either $\QR$ or $\QB$ so $p$ increases by $(3k+2)-(k+2)-2(k-1)=2$.
\end{itemize}
Note that the graphs induced by $c(R^i)$ and $c(B^i)$ respectively are paths.
These graphs are altered by adding one vertex at the end or moving end-vertex to respective $c(Q^i)$ set.
It follows that the graphs induced by $c(\QR^i)$ and $c(\QB^i)$ are both forests.

Assume that after many steps we end up with $\bigl|c(R^i)\bigr|=\bigl|c(B^i)\bigr|=\l$, $\bigl|c(\QR^i)\bigr| = \bigl|c(\QB^i)\bigr| = 2(35\l-36)-2$, $\odeg_o(v)=k-1$ for all $v \in c(R^i)\cup c(B^i)$, and we did not win yet.
In such situation the potential is
\begin{align*}
p^i &= 2\l(k+2)+2\bigl(2(35\l-36)-2\bigr)(3k+2) + 2(k-1)2\l \\
    &= \beforerestrees.
\end{align*}
We now perform one last step in which we might win, but if not then either $\QR^{i+1}$ or $\QB^{i+1}$ will have $2(35\l-36)-1$ colorful stars.
We take the $35\l-36$ independent colorful star centers of the bigger $Q^i$ set and perform the induced path strategy on them, which guarantees a monochromatic centipede. 

The final step might add various number of rounds to our strategy depending on the case which we end up in.
Case~\ref{case:forkok} demands at most $1+2k$ edges and we win.
Case~\ref{case:forknoksmall} cannot happen because $\odeg_o(v)=\odeg_o(u)=k-1$.
And case~\ref{case:forknokbig} demands that we add $2$ edges and then we perform the path strategy using at most $\rsio(P_\l) \leq 28\l - 27$ edges.
We get the final upper bound on the number of edges $\rsio(\S) \leq \restrees$.

We now discuss why the final centipede is induced.
First, let us partition all vertices used in the strategy into three groups: $\mathcal{R}=c(R) \cup c(\QR)$, $\mathcal{B}=c(B) \cup c(\QB)$, and $\mathcal{O}$ (which contains all the remaining vertices).
Note that in each step some vertices are added to the groups but once assigned they never change their group.
Vertices in $\mathcal{R}$ and $\mathcal{B}$ are always added to $c(R)$ or $c(B)$ and then they might be moved into $c(\QR)$ and $c(\QB)$ respectively.
Vertices in $\mathcal{O}$ are used during one step and are never used again, specifically in case~\ref{case:forkok} there are $k+m$ vertices created and all of them are connected to $1$ center vertex (in $c(R)$ or $c(B)$) and in case~\ref{case:forknok} one new vertex is connected to at most one vertex from $\mathcal{R}$ and one vertex from $\mathcal{B}$.
Assume \WLOG that the centipede is in red color.
The centipede either appears with centers in $c(R)$ or $c(\QR)$.
Assume the former occurred then the centers of $c(R)$ induce a path.
If the latter occurred then the vertices of $c(\QR)$ we used in the induced path strategy were independent.
In both cases, the leaves of the centipede appear in the $\mathcal{O}$.
These vertices have at most one edge to the $\mathcal{R}$ and have no edges among each other.

If $R$ or $B$ is empty then the strategy changes slightly.
In all the cases we omit creation of edge $e$ if $R=\emptyset$ and $f$ if $B=\emptyset$.
If $e$ is omitted assume it is red when deciding what edges to draw, and respectively when $f$ is omitted assume it is blue.
We observe that all the steps stay the same and the potential increases in the same manner
but we created fewer edges than necessary which does not contradict the devised upper bound.
\qed\end{proof}

Due to Beck~\cite{Beck1990size} we have a lower bound for trees $T$
which is $\rf(T) \ge \beta(T)/4$ where $\beta(T)$ is defined as
\[
    \beta(T)=|T_0|\Delta(T_0)+|T_1|\Delta(T_1),
\]
where $T_0$ and $T_1$ are partitions of the unique bipartitioning of the tree $T$.
The $\beta$ for our family of trees is $\beta(\S) \approx \left({\l/2} + {\k\l/2} \right)(\k+2) = \Theta(\k^2\l)$,
which gives us the lower bound on size-Ramsey number $\rf(\S)=\Omega(\k^2\l)$.

Since by Theorem~\ref{thm:trees} we have $\ro(\S)\le\rsio(\S)=O(\k\l)$ the online Ramsey number for $\S$
is asymptotically smaller than its size-Ramsey number.

\begin{corollary}\label{cor:trees}
There is an infinite sequence of trees $T_1, T_2, \dots$ such that $|T_i| < |T_{i+1}|$ for each $i \ge 1$ and
\[
    \lim_{i\to\infty} \frac{\ro(T_i)}{\rf(T_i)} = 0.
\]
\end{corollary}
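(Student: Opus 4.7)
The plan is to deduce the corollary immediately from Theorem~\ref{thm:trees} and the Beck lower bound quoted right after it, applied to a simple one-parameter subfamily of centipedes. Concretely I would set $T_i := S_{i,\ell_0}$ for some fixed $\ell_0\ge 2$, where $S_{k,\ell}$ denotes the centipede defined before Theorem~\ref{thm:trees}. Since $|T_i| = (i+1)(\ell_0+1)$ is strictly increasing in $i$, the size condition $|T_i|<|T_{i+1}|$ is automatic and I can focus on bounding the ratio.

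For the numerator I would invoke Theorem~\ref{thm:trees} directly, obtaining
\[
    \ro(T_i) \;\le\; \rsio(T_i) \;\le\; 426\, i\,\ell_0 - 442\, i + 308\, \ell_0 - 295 \;=\; O(i).
\]
For the denominator I would use the bound $\rf(T)\ge \beta(T)/4$ recalled in the excerpt together with the identity $\beta(S_{k,\ell}) = \Theta(k^2\ell)$ noted immediately afterwards, so that $\rf(T_i) = \Omega(i^2\ell_0) = \Omega(i^2)$. Dividing these two estimates yields $\ro(T_i)/\rf(T_i) = O(1/i) \to 0$, which is exactly the limit claimed in the corollary.

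The only step that needs genuine verification is the asymptotic $\beta(S_{k,\ell}) = \Theta(k^2\ell)$, which the excerpt only records with a ``$\approx$''. To pin it down I would describe the unique bipartition $(T_0,T_1)$ of a centipede: the spine $P_\ell$ alternates between the two classes, and the $k$ leaves at each spine vertex lie in the class opposite to their neighbor. Both classes then have size $\Theta(k\ell)$, and as soon as $\ell\ge 2$ each class contains at least one internal spine vertex of full degree $k+2$, so $\Delta(T_0) = \Delta(T_1) = k+2$. Substituting into the definition of $\beta$ reproduces the claimed $\Theta(k^2\ell)$, and no harder step arises; the corollary is then just a routine combination of the two bounds already proved in the excerpt.
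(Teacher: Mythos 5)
Your proposal is correct and follows essentially the same route the paper intends: take the centipedes $S_{k,\ell}$ with $\ell$ fixed and $k\to\infty$, combine the upper bound $\ro(S_{k,\ell})\le\rsio(S_{k,\ell})=O(k\ell)$ from Theorem~\ref{thm:trees} with Beck's lower bound $\rf(S_{k,\ell})\ge\beta(S_{k,\ell})/4=\Omega(k^2\ell)$, and conclude that the ratio is $O(1/k)$. The only nitpick is your claim that for $\ell\ge2$ \emph{both} bipartition classes contain an internal spine vertex of degree $k+2$ (false for $\ell=2$, where one class only achieves $k+1$), but this is harmless since $\Delta(T_0),\Delta(T_1)\ge k+1$ already gives $\beta(S_{k,\ell})=\Theta(k^2\ell)$.
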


\egroup 

\section*{Acknowledgments}
We would like to thank our colleagues Jiří Fiala, Pavel Veselý and Jana Syrovátková for fruitful discussions.

\bibliographystyle{plain}
\bibliography{main}

\end{document}